\newcolumntype{C}{>{\centering\arraybackslash}X} 
\acrodef{ofdm}[OFDM]{Orthogonal Frequency-Division Multiplexing}
\acrodef{fft}[FFT]{ fast Fourier transform}
\acrodef{iot}[IoT]{Internet of Things}
\acrodef{mimo}[MIMO]{multiple-input multiple-output}
\acrodef{siso}[SISO]{single-input single-output}
\acrodef{clt}[CLT]{ Central Limit Theorem}
\acrodef{ibi}[IBI]{inter block interference}
\acrodef{ofdma}[OFDMA]{Orthogonal Frequency-Division Multiple Access}
\acrodef{isi}[ISI]{inter symbol interference}
\acrodef{cp}[CP]{cyclic prefix}
\acrodef{zp}[ZP]{zero padding}
\acrodef{fir}[FIR]{finite impulse response}
\acrodef{v2x}[V2X]{Vehicle-to-everything}
\acrodef{nda}[NDA]{non-data-aided}
\acrodef{da}[DA]{data-aided}
\acrodef{ml}[ML]{maximum likelihood}
\acrodef{to}[TO]{timing offset}
\acrodef{wed}[WED]{Weighted Energy Detector}
\acrodef{ed}[ED]{Energy Detector}
\acrodef{rms}[RMS]{root mean square}
\acrodef{tm}[TE]{transition estimator}
\acrodef{ge}[GE]{Gamma estimator}
\acrodef{pdf}[PDF]{probability density function}
\acrodef{cfo}[CFO]{carrier frequency offset}
\acrodef{iid}[i.i.d]{independent and identically distributed}
\acrodef{bem}[BEM]{basis expansion model}
\acrodef{ls}[LS]{least squares}
\acrodef{mmse}[MMSE]{minimum mean square error}
\acrodef{pa}[PA]{pilot-aided}
\acrodef{dd}[DD]{decision-directed}
\acrodef{cc}[CE]{channel estimation}
\acrodef{dnn}[DNN]{deep neural network}
\acrodef{mse}[MSE]{mean-squared error}
\acrodef{dl}[DL]{deep learning}
\acrodef{ci}[CI]{channel state information}
\acrodef{mmse}[MMSE]{minimum mean square error}
\acrodef{awgn}[AWGN]{additive white Gaussian noise}
\acrodef{map}[MAP]{maximum a posteriori probability}
\acrodef{ber}[BER]{bit error rate}
\acrodef{kf}[KF]{Kalman filter}
\acrodef{snr}[SNR]{signal-to-noise ratio}
\acrodef{iot}[IoT]{Internet of Things}
\acrodef{chf}[CHF]{Characteristic function}
\newcommand{\bd}{\begin{description}}
\newcommand{\ed}{\end{description}}
\newcommand{\be}{\begin{enumerate}}
\newcommand{\ee}{\end{enumerate}}
\newcommand{\bi}{\begin{itemize}}
\newcommand{\ei}{\end{itemize}}
\newcommand{\bl}{\begin{list}}
\newcommand{\el}{\end{list}}
\newcommand{\bt}{\begin{tabbing}}
\newcommand{\et}{\end{tabbing}}
\newtheorem{theorem}{Theorem}
\newtheorem{lemma}{Lemma}[theorem]
\newtheorem{corollary}{Corollary}[theorem]
\begin{document}

\title{An Approximate Maximum Likelihood Time Synchronization Algorithm for Zero-padded OFDM in   Channels with Impulsive Gaussian Noise}
\author{
	\vspace{0.2cm}
   Koosha~Pourtahmasi~Roshandeh,~\IEEEmembership{Student~Member,~IEEE},
   Mostafa~Mohammadkarimi,~\IEEEmembership{Member,~IEEE}, and
    Masoud~Ardakani,~\IEEEmembership{Senior~Member,~IEEE}
}
\maketitle

\begin{abstract}
In wireless communication systems, \ac{ofdm} includes variants using either a  \ac{cp} or a \ac{zp} as the guard interval to avoid inter-symbol interference. 
\ac{ofdm} is ideally suited to deal with frequency-selective channels and \ac{awgn}; however, its performance may be dramatically degraded in the presence of impulse noise.
While the \ac{zp} variants of \ac{ofdm} exhibit lower \ac{ber} and higher energy efficiency compared to their \ac{cp} counterparts, they demand strict time synchronization, which is challenging in the absence of pilot and \ac{cp}. Moreover, on the contrary to \ac{awgn}, impulse noise severely corrupts data.
In this paper, a new low-complexity \ac{to} estimator for \ac{zp}-\ac{ofdm} for practical impulsive-noise environments is proposed, where 
relies on the second-other statistics of the multipath fading channel and noise. Performance comparison with existing  \ac{to} estimators demonstrates either a superior performance in terms of lock-in probability  or a significantly lower complexity over a wide range of \ac{snr} for various practical scenarios. 
\end{abstract}


\section{Introduction}

\IEEEPARstart{O}{fdm} technique is widely employed in wireless communications, mainly due to its ability in converting a frequency-selective fading channel into a group of flat-fading sub-channels \cite{lu2000space}. 
Compared to conventional single-carrier systems, \ac{ofdm} 
offers  increased robustness  against  multipath fading distortions  since    channel equalization  can be easily  performed  in  the  frequency domain  through  a  bank  of  one-tap  multiplier \cite{dai2010positioning}. Moreover, \ac{ofdm} can be efficiently implemented using  \ac{fft} \cite{murphy2002low}, which makes it more appealing compared to other multi-carrier modulation techniques such as Filter Bank Multi Carrier and Generalised Frequency Division Multiplexing. 

 Owing to superior advantages of \ac{ofdm}, it is exploited in many IEEE standards, such as, IEEE 802.15.3a, IEEE  802.16d/e,  and IEEE 802.15.4g  \cite{green, jimenez2004design, ofdm2ieee802.15}, which are used for different applications. For instance, \ac{ofdm} combined with massive \ac{mimo} technique achieves a high data rate, making it suitable for multimedia broadcasting \cite{kim2008apparatus}. Moreover, many \ac{iot} applications such as smart buildings and \ac{v2x} leverage \ac{ofdm} as their main communication scheme \cite{ofdmieee802.15, ofdm2ieee802.15 }.
 
 \ac{ofdm}, however, undergoes a sever \ac{isi} caused by the high selectivity of the fading channel \cite{ wang2005robust}. In order to mitigate this issue, usually a guard interval with a fixed length is inserted between every two consecutive \ac{ofdm} symbols. When the guard interval is the partial repetition of the transmitting data samples, this scheme is called \ac{cp}-\ac{ofdm} \cite{channelestimationCP }. The primary benefit of \ac{cp}-\ac{ofdm} is the ease of \ac{to} estimation or equivalently estimating the starting point of \ac{fft}. This is referred to as time synchronization \cite{tufvesson1999time}, and is easily carried out by using \ac{cp} and its correlation with the data sequence. Despite the ease of time synchronization in \ac{cp}-\ac{ofdm}, it possesses some major disadvantages such as excessive power transmission which is due to the transmission of \ac{cp}. \ac{zp}-\ac{ofdm} \cite{muquet2000reduced}, where the guard interval is filled with zeros, overcomes this issue. However, the time synchronization, or equivalently  \ac{to} estimation, in \ac{zp}-\ac{ofdm} becomes a very difficult and complicated task. 
 
 There are two approaches in order to estimate  \ac{to}  in \ac{zp}-\ac{ofdm}. In the first approach which is called \ac{da} time synchronization, a series of training sequences (pilots) are used to estimate \ac{to}. The second approach, referred to as \ac{nda} time synchronization, however, relies on the statistical properties of the transmitted data sequence. 
 
 \subsection{Related work}
 
 The \ac{da} time synchronization for \ac{zp}-\ac{ofdm} has been studied in the literature \cite{nasir2016timing}. On the other hand, \ac{nda} time synchronization lacks a reliable mathematical analysis. An \ac{nda} time synchronization algorithm for \ac{zp}-\ac{ofdm} has been  proposed in \cite{bolcskei2001blind, LeNir2010} which are mainly  heuristic algorithms. More specifically, these algorithms trace the energy ratio of partially cropped data sequences; which  do not always  show a reliable performance in terms of lock-in probability for highly selective channels. Also, a mathematical approach towards \ac{nda} \ac{to} estimation for \ac{zp}-\ac{ofdm} systems has been proposed in \cite{koosha2020}. The authors in \cite{koosha2020} proposed a \ac{ml} \ac{to} estimator for a \ac{zp}-\ac{ofdm} under a frequency selective channel. However,  the algorithm in  \cite{koosha2020} is highly complex which hinders its implementation for \ac{mimo} systems. Moreover, the algorithm proposed in \cite{koosha2020} is deigned for \ac{awgn} channel which implies that its performance degrades when the channel experiences an impulsive noise.

\subsection{Motivation}

\ac{zp}-\ac{ofdm} requires less transmission power compared to \ac{cp}-\ac{ofdm}, due to lack of \ac{cp}, which makes it a suitable candidate for future \ac{iot} devices. However, time synchronization becomes challenging in \ac{zp}-\ac{ofdm} where time synchronization algorithms fail to achieve high lock-in probability, or are highly complex for practical implementations. Moreover, the proposed \ac{to} estimators for \ac{zp}-\ac{ofdm} systems so far are developed for Gaussian noise models. However, many real-world channels, e.g. underwater, urban and indoor channels, are known to experience an impulsive noise, rather than a simple Gaussian noise \cite{blackard1993measurements, middleton1973man, middleton1993elements,middleton1987channel}. This noise is originally coming from the great amount of noise in the nature, and the electronic equipment. It is well-known that designing communication systems under simple Gaussian noise model can significantly affect the performance  of such systems when they experience an impulsive noise in reality \cite{wang1999robust}. Hence, an accurate yet low-complex time synchronization algorithm for \ac{zp}-\ac{ofdm}   are still needed to be developed. 

In this paper, we propose a low complexity mathematical approach towards \ac{nda} time synchronization for \ac{zp}-\ac{ofdm} systems in an impulsive noise channel.
Simulation results demonstrate that the proposed estimator has a negligible performance gap in terms of lock-in probability  with the estimator in \cite{koosha2020}  while possessing a significantly lower complexity.

\subsection{Contributions}

In this paper, we 

\begin{itemize}

    \item propose a low-complexity approximate \ac{ml} \ac{to} estimator for \ac{mimo} \ac{zp}-\ac{ofdm} systems in highly selective channels with impulsive noise. This algorithm (i) achieves high lock-in probability, and (ii) has significantly lower complexity compared to \cite{koosha2020}.

    \item Higher order statistics of the proposed approximate \ac{pdf} of the received samples are investigated.
    
    \item A lower complexity algorithm, i.e. \ac{ed}, for \ac{mimo} \ac{zp}-\ac{ofdm} systems experiencing Gaussian noise is proposed. This algorithm achieves high lock-in probability for high \ac{snr}s while possessing a low complexity.

    \item Complexity  of the proposed algorithms and the one in \cite{koosha2020} has been studied.

\end{itemize}{}


The paper is organized as follows. System model is discussed in Section \ref{sec: sys mod}. The main ideas and the proposed \ac{ml} estimator for  systems are presented in Section \ref{sec: siso}. The complexity of the algorithm is compared to that of proposed in \cite{koosha2020} is studied in Section \ref{sec: complexity}. Simulation results and conclusions are given in Sections \ref{sec: simul}  and \ref{sec: conclu}, respectively.

\textit{Notations}: Column vectors are denoted by bold lower case letters. Random variables are indicated by uppercase letters. Matrices are denoted by bold uppercase letters. 
Conjugate, absolute value, transpose, and the expected value are indicated by $(\cdot)^*$, $|\cdot|$, $(\cdot)^{\rm{T}}$, and $\mathbb{E}\{\cdot\}$, respectively. Brackets, e.g. ${\bf a}[k]$,  are used for discrete indexing of a vector  ${\bf a}$.

\section{System model} \label{sec: sys mod}

We consider a \ac{mimo}-\ac{ofdm} wireless system with $m_t$ and $m_r$ transmit and receive antennas, respectively. This system uses \ac{zp}-\ac{ofdm} technique to  communicate over a frequency selective Rayleigh fading channel. We assume a perfect synchronization at the transmit antennas. Let 
$\{x^{(n)}_k\}_{k=0}^{n_{\rm{x}}-1}$, $\mathbb{E}\{|x_k|^2\}= \sigma^2_{\rm{x}}$ be the $n_{\rm{x}}$ complex data samples from the $n$-th \ac{ofdm} block to be transmitted  from the $i$-th transmit antenna.  Hence, their corresponding \ac{ofdm} signal can be expressed as 
\begin{align}\label{eq: ofdm symbol}
x^{(n)}_i(t)=\sum_{k=0}^{n_{{\rm{x}}-1}} x^{(n)}_k e^{\frac{j2\pi k t}{T_{\rm{x}}}}\,\,\,\,\,\,\,\,\,\,\ 0 \le t \le T_{\rm{x}},
\end{align}
\noindent where  $T_{\rm{x}}$ denotes the duration of the data signal. To deal with the delay spread of the wireless channel, 
a zero-padding guard interval of length $T_{{\rm{z}}}$ is added to \eqref{eq: ofdm symbol}, in order to form the transmitted \ac{ofdm} signal. Hence, the $n$-th transmitted \ac{ofdm} signal from the $i$-th transmit antenna is given as 
\begin{align} \label{eq: s continues}
s^{(n)}_i(t)=
\begin{cases}
x^{(n)}_i(t)  \,\,\,\,\,\,\,\,\,\,\,\ 0 \le t \le T_{\rm{x}}  \\
0 \,\,\,\,\,\,\,\,\,\,\,\,\,\,\,\,\,\,\,\,\,\,\,\,\,\,\,   T_{\rm{x}} < t \le T_{\rm{s}},
\end{cases}
\end{align}
\noindent where $T_{\rm{s}}$ denotes the signal duration, and  $T_{\rm{s}}= T_{\rm{x}}+T_{\rm{z}}$.

Since the receiver demodulates the received signal and performs sampling,  it is more convenient from both practical and mathematical point of view to develop algorithms and perform the analysis in baseband and discrete format. To this end, we continue our analysis in discrete baseband format from now on. 

Let us  denote the sampling rate at the receiver by $f_{\rm{s}}=1/T_{\rm{sa}}$.  We assume the transmitted signal, i.e. $s^{(n)}_i(t)$, passes through a frequency selective channel with $n_{\rm{h}}$-taps. Let $\{h_{ji}[k]\}_{k=0}^{n_{\rm{h}}-1}$ denote the $k$-th channel tap between the transmit antenna $i$ and the received antenna $j$. The channel taps are assumed to be statistically independent complex Gaussian random variables with zero-mean, i.e. Rayleigh fading. The delay profile of the taps are given as 
\begin{align}\label{7u8i0000}
\mathbb{E}{\{}h[k]h^*[k-m]{\}}=\sigma_{{{\rm{h}}_k}}^2\delta[m],
\end{align}
$k=0,1,\dots, n_{\rm{h}}-1$. We assume the channel delay profile is known to the receiver.



In the absence of synchronization error and \ac{isi},  the discrete received baseband vector is expressed as 
\begin{align}\label{Sys Model: matrix form conv 2}
{\bf Y}^{(n)}=
\begin{cases}
{\bf H} {\bf S}^{(n)} + {\bf W}^{(n)},  \,\,\,\,\,\,\,\,\,\,\,\ n \ge 0  \\
{\bf W}^{(n)}, \,\,\,\,\,\,\,\,\,\,\,\,\,\,\,\,\,\,\,\,\,\,\,\,\,\,\,\,\,\,\,\,\,\,\,\,\,\,  n<0,
\end{cases}
\end{align}
where 
${\bf H}$ denotes the discrete channel filter, and is defined as 
\begin{equation} \label{matrix H}
 {\bf H}=\begin{pmatrix}\vspace{0.2 cm} 
{\bf {\rm{H}}}_{11} & {\bf {\rm{H}}}_{12} & \cdots & {\bf {\rm{H}}}_{1{\rm{m_t}}}  \\ \vspace{0.2 cm}
{\bf {\rm{H}}}_{21} & {\bf {\rm{H}}}_{22} & \cdots & {\bf {\rm{H}}}_{2{\rm{m_t}}}  \\ \vspace{0.2 cm}
\vdots & \cdots & \ddots & \cdots  \\ \vspace{0.2 cm}
{\bf {\rm{H}}}_{{\rm{m_r}} 1} & {\bf {\rm{H}}}_{{\rm{m_r}} 2} & \cdots & {\bf {\rm{H}}}_{ {\rm{m_r}} {\rm{m_t}}},  \\
\end{pmatrix}
\end{equation}
\noindent where ${\bf {\rm{H}}}_{ji}$ is the lower triangular Toeplitz channel matrix between transmit antenna $i$ and the received antenna $j$, with first column  $[h_{ji}[0] \ h_{ji}[1] \ \cdots \  h_{ji}[n_{\rm{h}}-1] ~ \  0 \  \cdots \ 0]^\text{T}$. We set this matrix to be $n_{\rm{s}} \times n_{\rm{s}}$ where 
\begin{equation}
\begin{split}
 n_{\rm{s}} = n_{\rm{x}}+n_{\rm{z}},~~~
 n_{\rm{s}}\triangleq T_{\rm{s}}/T_{\rm{sa}}\\
 n_{\rm{x}}\triangleq T_{\rm{x}}/T_{\rm{sa}},~~~
 n_{\rm{z}}\triangleq T_{\rm{z}}/T_{\rm{sa}}
\end{split}{}
\end{equation}
where $n_{\rm{s}},n_{\rm{x}},$ and $n_{\rm{z}}$ denote the number of \ac{ofdm} signal samples, number of data samples, and the number of zero samples, respectively. Moreover,  ${\bf S}^{(n)}$, ${\bf Y}^{(n)}$, and ${\bf W}^{(n)}$ are defined as 
\begin{equation} \label{sys mod:  s y w}
{\bf S}^{(n)} =\begin{pmatrix}\vspace{0.2 cm}
{\bf s}^{(n)}_1   \\ \vspace{0.2 cm}
{\bf s}^{(n)}_2   \\ \vspace{0.2 cm}
\vdots   \\ \vspace{0.2 cm}
{\bf s}^{(n)}_{{\rm{m_t}}}   \\
\end{pmatrix},~
{\bf Y}^{(n)} \triangleq \begin{pmatrix}\vspace{0.2 cm}
{\bf y}^{(n)}_1   \\ \vspace{0.2 cm}
{\bf y}^{(n)}_2   \\ \vspace{0.2 cm}
\vdots   \\ \vspace{0.2 cm}
{\bf y}^{(n)}_{{\rm{m_r}}}   \\
\end{pmatrix},~
{\bf W}^{(n)} \triangleq \begin{pmatrix}\vspace{0.2 cm}
{\bf w}^{(n)}_1   \\ \vspace{0.2 cm}
{\bf w}^{(n)}_2   \\ \vspace{0.2 cm}
\vdots   \\ \vspace{0.2 cm}
{\bf w}^{(n)}_{{\rm{m_r}}}   \\
\end{pmatrix}
\end{equation}
\noindent where ${\bf y}^{(n)}_j$, ${\bf w}^{(n)}_j$, and ${\bf s}^{(n)}_i$ denote the received vector at the $j$-th receive antenna, the noise vector at the $j$-th receive antenna, and the transmitted vector  from the $i$-th transmit antenna, respectively, and are given as   

   \begin{align} \label{eq: sys mod y w s anttena i}
    {\bf y}^{(n)}_j &\triangleq [ \hspace{0.1 cm}
y^{(n)}_j[0]   \hspace{0.2 cm}
y^{(n)}_j[1]    \hspace{0.2 cm}
\cdots    \hspace{0.2 cm}  y^{(n)}_j [{n_{\rm{s}}-1}] \hspace{0.1 cm}]^{\rm{T} }\\
    {\bf w}^{(n)}_j &\triangleq [ \hspace{0.1 cm} w^{(n)}_j[0]     \hspace{0.2 cm} 
w^{(n)}_j[1] \hspace{0.2 cm}
\cdots    \hspace{0.2 cm}
w^{(n)}_j [{n_{\rm{s}}-1}] \hspace{0.1 cm} ]^{\rm{T}}\\
    {\bf s}^{(n)}_i &\triangleq [ \hspace{0.1 cm} s^{(n)}_i[0]     \hspace{0.2 cm} 
s^{(n)}_i[1] \hspace{0.2 cm}
\cdots    \hspace{0.2 cm}
s^{(n)}_i [{n_{\rm{s}}-1}] \hspace{0.1 cm} ]^{\rm{T}} \nonumber \\
    &\hspace{-0.4 cm} =[ \hspace{0.1 cm}
 x^{(n)}_i(0)  \hspace{0.2 cm}
x^{(n)}_i(T_{\rm{sa}})   \hspace{0.2 cm}
\cdots   \hspace{0.2 cm} ... 
x^{(n)}_i((n_{\rm{x}}-1) T_{\rm{sa}})   \hspace{0.2 cm}  \underbrace{ 0 \hspace{0.2 cm} \cdots   \hspace{0.2 cm} 0 \hspace{0.1 cm} }_{n_{\rm{z}}} ]^{\rm{T}} 
\end{align}

Note that if the receiver starts to receive samples before any data is transmitted, it only receives noise samples, thus, we have ${\bf Y}^{(n)}= {\bf W}^{(n)},~ n<0$ in \eqref{Sys Model: matrix form conv 2}. In order to avoid \ac{isi}, the length of the zero-padding should be greater than or equal to the number of channel taps, i.e. $n_{\rm{z}} \ge n_{\rm{h}}$. This assumption holds throughout our analysis in this paper.  We also consider the Class A impulsive noise model, i.e. Gaussian mixtures, that is defined as  \cite{shongwe2015study}
\begin{equation} \label{eq: noise pdf}
f_{W^{(n)}_j[k]}(w) = \sum^{L-1}_{l=0} p_l \mathcal{CN}(w:0, \sigma_{{\rm{w}}_l}^2),
\end{equation}
\noindent for $k \in \{0,1,\cdots,{n_{\rm{s}}-1} \}$. It is well-known that the Gaussian mixture noise is a more accurate noise model than the conventional simple Gaussian model \cite{wang1999robust}. That is, the Gausssian mixture distribution models the noise in many real-world channels such as urban, underwater, and indoor channels,  more accurately compared to the conventional simple Gaussian model \cite{blackard1993measurements, middleton1973man, middleton1993elements,middleton1987channel}.

According to the \ac{clt}, transmitted \ac{ofdm} samples, i.e. $s^{(n)}_i[k]= x^{(n)}_i(kT_{{\rm{sa}}}),~ \forall k \in \{0, 1, \cdots , {n_{\rm{x}}}-1 \}$,  can be modeled as \ac{iid} zero-mean Gaussian random variables. Hence,
\begin{align} \label{eq: ofdm samples gauss}
s^{(n)}_i[k] ~\text{or}~ x^{(n)}_i(kT_{{\rm{sa}}}) \sim \mathcal{CN}(0,\sigma^2_x),~~~ \forall k \in \{0, 1, \cdots , {n_{\rm{x}}}-1 \}
\end{align}
where 
\begin{align} \label{eq: ofdm samples gauss power}
\mathbb{E}\Big{\{}s^{(n)}_i[k] s^{(n)}_p[k']^*\Big{\}}&=\mathbb{E}\Big{\{}x^{(n)}_i(kT_{{\rm{sa}}}) x^{(n)}_p(k'T_{{\rm{sa}}})^*\Big{\}} \\ 
&= \sigma^2_x \delta[k-k'] \delta[i-p],\\ \nonumber &\forall i,p \in \{1,2,\cdots,{m_{\rm{t}}} \}, \\  \nonumber
&\forall k,k' \in \{0, 1, \cdots , {n_{\rm{x}}}-1 \}.
\end{align}
Now, assume that  there is a \ac{to} $\tau \triangleq dT_{\rm{sa}}+\epsilon$ between the transmitter and the receiver, where  $d$ and $\epsilon$ represent the integer and fractional part of the \ac{to}, respectively. Since the fractional part of \ac{to}, $\epsilon$,  can  be  corrected  through channel  equalization, it  suffices to estimate the  beginning  of  the  received  \ac{ofdm} vector within one sampling period. 
In fact, it  is  common in practice to model the \ac{to} as a multiple of  the  sampling   period,  and consider the remaining fractional error as part of the channel impulse response. To this end, we focus on estimating the integer part of the \ac{to},  $d$, which is essential in order to perform the \ac{fft} operation at the receiver, and decode the data in subsequent steps.

The next sections propose an approximate \ac{ml} estimator for estimating $d$.

\section{Maximum Likelihood Estimation For Single-Input Single-Output System} \label{sec: siso}

In order to better understand the main ideas, and for the sake of notation simplicity, we first derive the approximate \ac{ml} \ac{to} estimator for a \ac{siso}-\ac{ofdm} system where ${\rm{m_t}}=1$ and ${\rm{m_r}}=1$. We then extend the proposed results for \ac{siso} systems in order to obtain the \ac{ml} \ac{to} estimator for \ac{mimo}-\ac{ofdm} systems.


We consider a \ac{siso} wireless system; hence, ${\rm{m_t}}=1$ and ${\rm{m_r}}=1$. For the sake of notation simplicity, we remove the subscripts $i$  or $j$, denoting the variables associated with the $i$-th transmit or $j$-th receive antenna, from ${\bf y}^{(n)}_j, y^{(n)}_j[k], {\bf w}^{(n)}_j, w^{(n)}_j[k], {\bf s}^{(n)}_i, s^{(n)}_i[k], {\bf {\rm{H}}}_{ji}, h_{ji}[k]$, and $ x^{(n)}_i(k)
$. Hence, Equation \eqref{Sys Model: matrix form conv 2} can be rewritten as 
\begin{align}\label{siso: matrix form conv 2}
{\bf y}^{(n)}=
\begin{cases}
 {\rm{H}} {\bf s}^{(n)} + {\bf w}^{(n)} \triangleq  {\bf v}^{(n)} +{\bf w}^{(n)},  \,\,\,\,\,\,\,\,\,\,\,\ n \ge 0  \\
{\bf w}^{(n)}, \,\,\,\,\,\,\,\,\,\,\,\,\,\,\,\,\,\,\,\,\,\,\,\,\,\,\,\,\,\,\,\,\,\,\,\,\,\,\,\,\,\,\,\,\,\,\,\,\,\,\,\,\,\,\,\,\,\,\,\,\,\,\,\,\,\,\,\,\,\,\,\,\,\,  n<0
\end{cases}
\end{align}
\noindent where
   \begin{align}
    {\bf y}^{(n)} &\triangleq [ \hspace{0.1 cm}
y^{(n)}[0]    \hspace{0.2 cm}
y^{(n)}[1]    \hspace{0.2 cm}
\cdots    \hspace{0.2 cm}  y^{(n)} [{n_{\rm{s}}}-1] \hspace{0.1 cm}]^{\rm{T}} \\    {\bf v}^{(n)} &\triangleq [ \hspace{0.1 cm} v^{(n)}[0]     \hspace{0.2 cm} \cdots 
v^{(n)}[1]    \hspace{0.2 cm}
\cdots    \hspace{0.2 cm}
v^{(n)} [{n_{\rm{s}}}-1] \hspace{0.1 cm} ]^{\rm{T}}\\
    {\bf w}^{(n)} &\triangleq [ \hspace{0.1 cm} w^{(n)}[0]     \hspace{0.2 cm} \cdots 
w^{(n)}[1]    \hspace{0.2 cm}
\cdots    \hspace{0.2 cm}
w^{(n)} [{n_{\rm{s}}}-1] \hspace{0.1 cm} ]^{\rm{T}}\\
    {\bf s}^{(n)} &\triangleq [ \hspace{0.1 cm} s^{(n)}[0]     \hspace{0.2 cm} 
s^{(n)}[1] \hspace{0.2 cm}
\cdots    \hspace{0.2 cm}
s^{(n)} [{n_{\rm{s}}-1}] \hspace{0.1 cm} ]^{\rm{T}} \nonumber \\
    &\hspace{-0.4 cm} =[ \hspace{0.1 cm}
 x^{(n)}(0)  \hspace{0.2 cm}
x^{(n)}(T_{\rm{sa}})   \hspace{0.2 cm}
\cdots   \hspace{0.2 cm} 
x^{(n)}((n_{\rm{x}}-1) T_{\rm{sa}})   \hspace{0.2 cm}  \underbrace{ 0 \hspace{0.2 cm} \cdots   \hspace{0.2 cm} 0 \hspace{0.1 cm} }_{n_{\rm{z}}} ]^{\rm{T}}.
\end{align}
\noindent and  ${\bf {\rm{H}}}$ is the lower triangular $n_{\rm{s}} \times n_{\rm{s}}$ Toeplitz channel matrix with first column  $[h[0] \ h[1] \ \cdots \  h[n_{\rm{h}}-1] ~ \  0 \  \cdots \ 0]^\text{T}$.

Now, we assume that the integer part of the \ac{to}, $d$, can take  values from a set $\mathcal{D}= \{-n_{\rm{s}}+1,\cdots,-1,0,1,\cdots,n_{\rm{s}}-1\}$, i.e. $d \in \mathcal{D}$. Note that the negative values of the delay, $d$, denotes the time when the receiver starts early to receive samples. That is, for $d<0$, the receiver receives  $\lvert d\rvert$ noise samples from the environment, and then receives the transmitted \ac{ofdm}  samples starting from the $\lvert d \rvert+1$-th sample at the receiver. Similarly, when $d \ge 0$, the receiver starts late to receive the samples. In other words,  the receiver misses the first  $d$ samples from the transmitted \ac{ofdm} samples. Allowing $d$ to take both negative and positive values enables the final estimator to perform both the frame and symbol synchronization; hence, it possesses a significant advantage. 

The problem of estimating the \ac{to} can be formulated as a multiple hypothesis testing problem, i.e. ${\rm{H}}_d,~ \forall d \in \mathcal{D}=\{-n_{\rm{s}}+1, \cdots,  n_{\rm{s}}-1\}$. We first assume that the receiver uses $N$ observation vectors, ${\bf y}^{(0)}$, ${\bf y}^{(1)}$, $\cdots$, ${\bf y}^{(N-1)}$, each with length $n_{{\rm{s}}}$, in order to estimate the timing offset $d$. Later, we allow the receiver to use any arbitrary number of received samples, not necessarily a multiple of $n_{{\rm{s}}}$, for estimation. In order to derive the \ac{ml} \ac{to} estimator, we need to obtain the joint \ac{pdf} of the $N$ observation vectors under the different hypotheses ${\rm{H}}_d$. Assuming ${\rm{H}}_{d}$, this \ac{pdf} is denoted by $f( {\bf y}^{(0)}, {\bf y}^{(1)}, \cdots, {\bf y}^{(N-1)}| {\rm{H}}_{d})$.

The following lemma from \cite{koosha2020} gives some insights about the samples of the received vectors. We provide this lemma here without the proof.
\begin{lemma} \label{lemma: obser vec}
 For any arbitrary $n$ and $k,~ 0 \le k \le n_{\rm{s}}-1$, let us define the actual index of an observation sample, i.e. $y^{(n)}[k]$, as $k^{\rm{ac}}= n_{\rm{s}}n+k$. Then, for any arbitrary $d \in \mathcal{D}$, the elements of the  observation vectors, i.e. $y^{(n)}[k]$, are uncorrelated random variables. Moreover, the elements with an actual index difference greater than $n_{\rm{h}}$ are independent. 
\end{lemma}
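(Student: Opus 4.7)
The plan is to work directly from the decomposition $y^{(n)}[k] = v^{(n)}[k] + w^{(n)}[k]$ for $n \ge 0$ (with $y^{(n)}[k] = w^{(n)}[k]$ for $n < 0$), where $v^{(n)}[k] = \sum_{m=0}^{n_{\rm h}-1} h[m]\, s^{(n)}[k-m]$ is the length-$n_{\rm h}$ convolution produced by the zero-padded block ${\bf s}^{(n)}$. The three underlying sources of randomness are mutually independent: the channel taps $h[m]$ with delay profile $\mathbb{E}\{h[k] h^*[k-m]\} = \sigma_{{\rm h}_k}^2 \delta[m]$, the data samples i.i.d.\ $\mathcal{CN}(0, \sigma_{\rm x}^2)$ on the data support and identically zero on the guard interval, and the Gaussian-mixture noise samples i.i.d.\ across $(n, k)$.

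For the uncorrelatedness claim I would compute $\mathbb{E}\{y^{(n_1)}[k_1]\, y^{(n_2)*}[k_2]\}$ by splitting it into signal--signal, signal--noise, and noise--noise contributions. The signal--noise cross terms vanish since $w$ is zero-mean and independent of $(h, s)$, and the noise--noise piece equals $\big(\sum_l p_l \sigma_{{\rm w}_l}^2\big)\delta[n_1 - n_2]\,\delta[k_1 - k_2]$ by the i.i.d.\ noise assumption. The signal--signal piece expands as
\[
\sum_{m_1, m_2} \mathbb{E}\{h[m_1] h^*[m_2]\}\, \mathbb{E}\{s^{(n_1)}[k_1 - m_1]\, s^{(n_2)*}[k_2 - m_2]\},
\]
in which the delay profile collapses the double sum to $m_1 = m_2$ and the data i.i.d.\ structure then forces $n_1 = n_2$ and $k_1 = k_2$. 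Hence the cross-covariance vanishes whenever the actual indices differ, giving uncorrelatedness; the boundary cases with one or both indices in the noise-only region $n < 0$ reduce trivially to the noise--noise computation.

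For the independence claim when $|k_1^{\rm ac} - k_2^{\rm ac}| > n_{\rm h}$, the key structural observation is that the length-$n_{\rm h}$ data windows $\{s^{(n_i)}[k_i - m]\}_{m=0}^{n_{\rm h}-1}$ feeding $v^{(n_1)}[k_1]$ and $v^{(n_2)}[k_2]$ become disjoint as soon as the actual indices separate by more than $n_{\rm h}$, either because the blocks differ or because within the same block the window supports no longer overlap. Since the data and noise samples are i.i.d.\ across their indices, the two observations are, conditional on the channel realization, deterministic functions of two disjoint and hence independent collections of random variables, which yields conditional independence. The main obstacle is precisely that the channel taps are common to both samples, so strict marginal independence does not follow from the disjoint data supports alone; the proof must therefore proceed conditionally on $h$ (consistent with treating the delay profile as known system information at the receiver) and invoke this conditional factorization. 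The uncorrelatedness part, by contrast, is a purely second-order computation and involves no such subtlety.
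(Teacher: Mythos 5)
A preliminary remark on the comparison itself: the paper does not prove this lemma at all --- it is imported from \cite{koosha2020} and explicitly stated ``without the proof'' --- so your argument can only be judged on its own terms. The uncorrelatedness half is fine: splitting the cross-covariance into signal--signal, signal--noise and noise--noise parts, killing the mixed terms by zero mean and independence, and collapsing the signal--signal double sum via the delay profile $\mathbb{E}\{h[m_1]h^*[m_2]\}=\sigma^2_{{\rm h}_{m_1}}\delta[m_1-m_2]$ together with the i.i.d.\ data is exactly the standard second-order argument, and the timing offset $d$ only relabels actual indices, so nothing is hidden there.

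The independence half, however, contains a genuine gap that you identify and then argue around rather than resolve. Conditioning on the channel realization proves conditional independence given $h$, which is strictly weaker than the unconditional independence the lemma asserts, and your parenthetical justification --- ``consistent with treating the delay profile as known'' --- conflates knowledge of the second-order statistics $\sigma^2_{{\rm h}_k}$ with knowledge of the realization $h[k]$: the receiver, and the entire derivation that follows (which uses the channel-averaged variances $\sigma^2_k=\sum_r \sigma^2_{{\rm h}_r}\sigma^2_{\rm x}/2$ rather than $|h[r]|^2$), works with the unconditional law. Unconditionally, two samples in the \emph{same} block with disjoint data windows are uncorrelated but not independent precisely because they share the taps: already for $n_{\rm h}=1$, $v_1=h[0]s_1$ and $v_2=h[0]s_2$ give $\mathbb{E}\{|v_1|^2|v_2|^2\}=\mathbb{E}\{|h[0]|^4\}\sigma^4_{\rm x}=2\sigma^4_{\rm h}\sigma^4_{\rm x}\neq\mathbb{E}\{|v_1|^2\}\mathbb{E}\{|v_2|^2\}$. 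So your route, as written, proves a different (weaker) statement; the unconditional claim holds across blocks only if the per-block channel realizations are independent, and within a block it must either be read conditionally on $h$ or accepted as an approximation of the same kind the paper already invokes when it replaces the exact joint law by the product in \eqref{eq: indepency 1}. A correct write-up must say explicitly which of these readings it establishes instead of presenting the conditional factorization as if it yielded the marginal independence claimed.
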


\begin{corollary} \label{corol: independ}
Samples from different observation vectors are independent. 
\end{corollary}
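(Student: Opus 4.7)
The plan is to push the argument behind Lemma~\ref{lemma: obser vec} one step further, exploiting the block-level independence of the OFDM data and the receiver noise, with the zero-padding assumption doing the heavy lifting.

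First I would fix two samples $y^{(n_1)}[k_1]$ and $y^{(n_2)}[k_2]$ with $n_1 \neq n_2$ and expand each one using \eqref{siso: matrix form conv 2} as
\begin{align*}
y^{(n)}[k] \;=\; \sum_{m=0}^{n_{\rm{h}}-1} h[m]\,s^{(n)}[k-m] \;+\; w^{(n)}[k].
\end{align*}
The point of writing it this way is to make explicit that the zero-padding hypothesis $n_{\rm{z}}\ge n_{\rm{h}}$, already in force throughout the excerpt, eliminates inter-block interference: the channel-induced tail of block $n-1$ is entirely absorbed by the $n_{\rm{z}}$ trailing zeros of ${\bf s}^{(n-1)}$ and never contaminates ${\bf y}^{(n)}$. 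Hence $y^{(n)}[k]$ is a function \emph{only} of the data vector ${\bf s}^{(n)}$ of its own block, the channel taps $h[\cdot]$, and the single noise entry $w^{(n)}[k]$, with no carry-over from any other block.

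Next I would invoke the block-level independence already built into the model. The data vectors $\{{\bf s}^{(n)}\}_n$ are independent across $n$ because distinct OFDM symbols carry statistically independent information, as used implicitly in \eqref{eq: ofdm samples gauss power}, and the noise entries $\{w^{(n)}[k]\}$ are \ac{iid} across all $(n,k)$ per \eqref{eq: noise pdf}. Consequently, conditional on the channel realization $\{h[m]\}$, the samples $y^{(n_1)}[k_1]$ and $y^{(n_2)}[k_2]$ are measurable functions of two disjoint collections of independent random variables and are therefore conditionally independent. To upgrade this to unconditional independence I would lean on the Gaussian approximation that the paper subsequently develops for the received samples: under that approximation each $y^{(n)}[k]$ is marginally complex Gaussian, and a direct second-order calculation analogous to Lemma~\ref{lemma: obser vec} gives $\mathbb{E}\{y^{(n_1)}[k_1]\,y^{(n_2)*}[k_2]\}=0$ for $n_1\neq n_2$, because both the data cross-term $\mathbb{E}\{s^{(n_1)}[\cdot]\,s^{(n_2)*}[\cdot]\}$ and the noise cross-term $\mathbb{E}\{w^{(n_1)}[k_1]\,w^{(n_2)*}[k_2]\}$ vanish. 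Joint Gaussianity plus zero covariance then yields independence.

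The main obstacle is conceptual rather than computational: strictly speaking $y^{(n)}[k]$ is a sum of products $h\cdot s$ of two Gaussians plus Gaussian noise and is not exactly Gaussian, so the final step requires exactly the same Gaussian approximation of the received samples that the paper invokes for the approximate \ac{ml} derivation. The only bridge connecting samples from different observation vectors is the shared channel, and the no-\ac{isi} condition ensures this bridge factors out cleanly once Gaussianity (hence the uncorrelated-implies-independent step) is granted.
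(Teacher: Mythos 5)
Your decomposition is the right one and matches what the paper implicitly relies on: with $n_{\rm{z}}\ge n_{\rm{h}}$ the zero-padding absorbs the channel tail, so by \eqref{siso: matrix form conv 2} each $y^{(n)}[k]$ is a function only of its own block's data ${\bf s}^{(n)}$, the channel taps, and its own noise entry, and blocks of data together with the i.i.d.\ noise of \eqref{eq: noise pdf} form disjoint independent collections. Up to the observation that this gives independence \emph{conditionally on the channel}, your argument is sound and is essentially the content the paper reads off from Lemma~\ref{lemma: obser vec} (imported from \cite{koosha2020}).

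The genuine gap is your final "upgrade" step. You argue: marginal Gaussianity of the received samples (via the paper's approximation) plus zero cross-covariance implies independence. This fails on three counts. First, in this paper the received samples are \emph{not} Gaussian even under the approximation of $V^{(n)}_{\rm{I}}[k]$: their marginals are Gaussian mixtures, cf.\ \eqref{eq: pdf inphase siso}, and for Gaussian mixtures uncorrelatedness does not imply independence. Second, even in a purely Gaussian idealization, marginal Gaussianity of two samples does not give \emph{joint} Gaussianity of the pair, so "uncorrelated $+$ Gaussian $\Rightarrow$ independent" is not available; vanishing cross-moments is exactly the uncorrelatedness already asserted in Lemma~\ref{lemma: obser vec} and is strictly weaker than the corollary. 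Third, the logical order is inverted: Corollary~\ref{corol: independ} is what licenses the exact equalities (the cross-block factorization) in \eqref{eq: indepency 1}, \emph{before} the Gaussian approximation of the signal part is introduced, so its proof cannot rest on that later approximation. The honest closing move is the one you already have in hand: conditional on the channel realization the two samples are functions of disjoint independent variables; the unconditional statement, with a random channel common to all blocks, does not follow from any Gaussianity argument (block statistics remain coupled through the common channel gain) and is inherited by the paper from Lemma~\ref{lemma: obser vec} of \cite{koosha2020} rather than derived from \eqref{eq: pdf inphase siso}.
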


According to Lemma \ref{lemma: obser vec}, although the elements of the observation vectors, i.e $y^{(n)}[k]$, are uncorrelated, and those with actual index difference greater than $n_{\rm{h}}$ are independent, but, they are not generally independent. Hence, deriving a closed-form expression for the joint \ac{pdf}, i.e. $f( {\bf y}^{(0)}, {\bf y}^{(1)}, \cdots, {\bf y}^{(N-1)}| {\rm{H}}_{d})$, is not mathematically tractable. However, authors in \cite{koosha2020} showed that the final estimator is less sensitive to the independency assumption of the elements of the observation vectors  than their sole \ac{pdf}s. Thus, using Corollary \ref{corol: independ}, we can write 
\begin{equation} \label{eq: indepency 1}
\begin{split} 
    f( {\bf y}^{(0)}, {\bf y}^{(1)}, \cdots, &{\bf y}^{(N-1)}| {\rm{H}}_{d})  \\ 
    &= f_{{\bf Y}^{(0)}}( {\bf y}^{(0)}|{\rm{H}}_{d}) ~ f_{{\bf Y}^{(1)}}({\bf y}^{(1)}|{\bf y}^{(0)},{\rm{H}}_{d})
\cdots   \\ 
  &~~~~~~~~~ f_{{\bf Y}^{(N-1)}}({\bf y}^{(N-1)} | {\bf y}^{(0)}, {\bf y}^{(1)}, \cdots , {\rm{H}}_{d})\\ 
  &= \prod_{n=0}^{N-1}  f_{{\bf Y}^{(n)}}( {\bf y}^{(n)}| {\rm{H}}_{d}) \\
    &\simeq \prod_{n=0}^{N-1}
\prod_{k=0}^{n_{\rm{s}}-1}  f_{Y^{(n)}[k]}(y^{(n)}[k]| {\rm{H}}_{d})
\end{split}
\end{equation}
Let us denote the in-phase and quadrature components of a received sample, i.e. $y^{(n)}[k]$, by $y^{(n)}_{{\rm{I}}}[k]$ and $y^{(n)}_{{\rm{Q}}}[k]$, respectively. Since, the in-phase and quadrature components are independent from each other, one can rewrite Equation \eqref{eq: indepency 1} as 
\begin{equation} \label{eq: indepency 2}
\begin{split} 
    f( {\bf y}^{(0)}, &{\bf y}^{(1)}, \cdots, {\bf y}^{(N-1)}| {\rm{H}}_{d}) \simeq \prod_{n=0}^{N-1}
\prod_{k=0}^{n_{\rm{s}}-1}  f_{Y^{(n)}[k]}(y^{(n)}[k]| {\rm{H}}_{d})\\
&= \prod_{n=0}^{N-1}
\prod_{k=0}^{n_{{\rm{s}}}-1}  f_{Y^{(n)}_{{\rm{I}}}[k]}\big(y^{(n)}_{{{\rm{I}}}}[k] |  {\rm{H}}_{d} \big) ~  f_{Y^{(n)}_{{\rm{Q}}}[k]}\big(y^{(n)}_{{{\rm{Q}}}}[k] |  {\rm{H}}_{d} \big)
\end{split}
\end{equation}
Using Equation \eqref{siso: matrix form conv 2},  we have $Y^{(n)}_{{{\rm{I}}}}[k]=V^{(n)}_{{{\rm{I}}}}[k] + W^{(n)}_{{{\rm{I}}}}[k]$. The same equation holds for the quadrature components of the received samples. Hence, 
the \ac{pdf} $f_{Y^{(n)}_{{\rm{I}}}[k]}\big(y^{(n)}_{{{\rm{I}}}}[k] |  {\rm{H}}_{d} \big)$, or similarly  $f_{Y^{(n)}_{{\rm{Q}}}[k]}\big(y^{(n)}_{{{\rm{Q}}}}[k] |  {\rm{H}}_{d} \big)$, for $n\ge0$ can be obtained as 
\begin{equation} \label{eq: convol siso}
\begin{split}
f_{Y^{(n)}_{{\rm{I}}}[k]}\big(y^{(n)}_{{{\rm{I}}}}[k] |  {\rm{H}}_{d} \big)=& f_{V^{(n)}_{{{\rm{I}}}}[k]} \big( v^{(n)}_{{{\rm{I}}}}[k] |  {\rm{H}}_{d} \big) \\ 
&* f_{W^{(n)}_{{{\rm{I}}}}[k]}\big( w^{(n)}_{{{\rm{I}}}}[k] |  {\rm{H}}_{d} \big)
\end{split}
\end{equation}
\noindent where $*$ denotes the convolution operation. Deriving the \ac{pdf} $f_{V^{(n)}_{{{\rm{I}}}}[k]} \big( v^{(n)}_{{{\rm{I}}}}[k] |  {\rm{H}}_{d} \big)$, or $f_{V^{(n)}_{{{\rm{Q}}}}[k]} \big( v^{(n)}_{{{\rm{Q}}}}[k] |  {\rm{H}}_{d} \big)$, is complex, and results in a complicated expression \cite{koosha2020}. Here, we approximate this \ac{pdf} with a Gaussian \ac{pdf}  with zero mean and  a specific variance for each received sample in favor of reducing the complexity of the final estimator. We later justify this assumption in more details.


The first step to derive the \ac{pdf} of in-phase component of a received sample $Y^{(n)}_{{\rm{I}}}[k]$ is to determine the corresponding mean and variance of $V^{(n)}_{{\rm{I}}}[k]$ under assumption ${\rm{H}}_d$.  In the absence of \ac{to}, i.e. ${\rm{H}}_0$,  Equation \eqref{siso: matrix form conv 2} holds and can be expanded as Equation \eqref{eq: expa conv 2} given at the top of the next  page.  

Using Equations \eqref{eq: expa conv 2} and \eqref{eq: ofdm samples gauss} and the fact that $\mathbb{E}\{h_{{\rm{I}}}[k]\}=\mathbb{E}\{h_{{\rm{Q}}}[k]\}=0$, we conclude 
\begin{align}\label{eq: mean siso}
\mathbb{E}\{V^{(n)}_{{\rm{I}}}[k]|{\rm{H}}_0\}=0.
\end{align}
By substituting
\begin{align}
\mathbb{E}\big{\{}s^{(n)}_{{{\rm{I}}}^2}[k]\big{\}}&= \mathbb{E}\big{\{}s^{(n)}_{{{\rm{Q}}}^2}[k]\big{\}} \\ \nonumber
&=\begin{cases}
\frac{\sigma_{\rm{x}}^2}{2},  \,\,\,\,\,\,\,\,\,\,\,\,\,\,\,\,\,\,\,\,\,\,\  0 \le k \le n_{\rm{x}}-1  \\
0, \,\,\,\,\,\,\,\,\,\,\,\,\,\,\,\,\,\,\,\,\,\,\,\,\,\,\,\,  n_{\rm{x}} \le k \le n_{\rm{s}}-1
\end{cases}
\end{align}
\noindent and 
$\mathbb{E}\{h^{(n)}_{\rm{I}}[k] h^{(n)}_{\rm{I}}[r]\}=\mathbb{E}\{h^{(n)}_{\rm{Q}}[k] h^{(n)}_{\rm{Q}}[r]\}=\sigma_{{\rm{h}}_k}^2\delta [k-r]$, and using Equation \eqref{eq: expa conv 2}, one can derive 
\begin{align}\label{eq: var siso}
\sigma^2_k &\triangleq   \mathbb{E}\big{\{} {\big( v^{(n)}_{{\rm{I}}}}[k] \big)^2|{\rm{H}}_0\big{\}}  \\
     &=\begin{cases}
      \sum^{b}_{r=a} \frac{ \sigma^2_{h_r}   \sigma^2_{{\rm{x}}}}{2} &~~~ 0 \le k < n_{\rm{x}} + n_{\rm{h}}-2 \\ \\ \nonumber
       0    &~~~       n_{\rm{x}} + n_{\rm{h}}-1 \le k \le n_{\rm{s}}-1  \\ 
       &~~~ \rm{or}~ n<0 \\
     \end{cases}
\end{align}
where
 \begin{equation} \label{eq: expa conv 21}
(a,b)  \hspace{-0.2em}= \hspace{-0.2em}
     \begin{cases}
      (0,k) &~ 0 \le k \le n_{\rm{h}}-2\\
       (0,n_{\rm{h}}-1) &~ n_{\rm{h}}-1 \le k \le n_{\rm{x}}-1\\
       (k-n_{\rm{x}}+1,n_{\rm{h}}-1) &~ n_{\rm{x}} \le k \le n_{\rm{x}}+n_{\rm{h}}-2.\\
     \end{cases}
\end{equation}

Using Equations \eqref{eq: var siso} and \eqref{eq: expa conv 21}, one can define the vector of variances of the received samples under assumption ${\rm{H}}_0$ as

\begin{figure*}[t] 
\label{eq:6767}

 \begin{equation} \label{eq: expa conv 2}
v^{(n)}_{{\rm{I}}}[k] \hspace{-0.2em}= \hspace{-0.2em}
     \begin{cases}
      \sum^{m}_{u=0} h_{\rm{I}}[u] s^{(n)}_{{\rm{I}}}[k-u] - h_{\rm{Q}}[u] s^{(n)}_{{\rm{Q}}}[k-u]  ~~~~~~~~~~~~~~~~~~~~~ 0 \le k < n_{\rm{h}}-2 \\
       \sum^{n_{\rm{h}}-1}_{u=0} h_{\rm{I}}[u] s^{(n)}_{{\rm{I}}}[k-u] - h_{\rm{Q}}[u] s^{(n)}_{{\rm{Q}}}[k-u]  ~~~~~~~~~~~~~~~~~~~~ n_{\rm{h}}-1 \le k \le n_{\rm{x}}-1, \\
       \sum^{n_{\rm{h}}-1}_{u=m-n_{\rm{x}}+1} h_{\rm{I}}[u] s^{(n)}_{{\rm{I}}}[k-u] - h_{\rm{Q}}[u] s^{(n)}_{{\rm{Q}}}[k-u]  ~~~~~~~~~~~~~ n_{\rm{x}} \le k \le n_{\rm{x}}+n_{\rm{h}}-2,\\
       0 ~~~~~~~~~~~~~~~~~~~~~~~~~~~~~~~~~~~~~~~~~~~~~~~~~~~~~~~~~~~~~~~~~~~~~~ n_{\rm{x}}+n_{\rm{h}}-1 \le k \le n_{\rm{s}}-1,
     \end{cases}
\end{equation}
\\
\noindent\rule{\textwidth}{1pt}
\end{figure*}

$\boldsymbol{ \sigma}^2_{{\bf V}_{\rm{I}}|{\rm{H}}_0}\triangleq$
\begin{equation}\label{eq: var matrix H0 siso} 
\hspace{-0.5cm}   \left[\begin{array}{@{}c}
 \vspace{0.2cm}  
 \vdots\\ \vspace{0.2cm}
\sigma^2_{{\bf V}_{\rm{I}}[-2]|{\rm{H}}_0} \\ \vspace{0.2cm} \sigma^2_{{\bf V}_{\rm{I}}[-1]|{\rm{H}}_0}  \\   \hdashline  \\
 \vspace{0.2cm} \sigma^2_{{\bf V}_{\rm{I}}[0]|{\rm{H}}_0} \\  \sigma^2_{{\bf V}_{\rm{I}}[1]|{\rm{H}}_0}  \\ \vdots \\ \vspace{0.2cm} \sigma^2_{{\bf V}_{\rm{I}}[n_{\rm{s}}-1]|{\rm{H}}_0}  \\   \hdashline  \\
 \vspace{0.2cm} \sigma^2_{{\bf V}_{\rm{I}}[n_{\rm{s}}]|{\rm{H}}_0} \\  \sigma^2_{{\bf V}_{\rm{I}}[n_{\rm{s}}+1]|{\rm{H}}_0}  \\ \vdots \\ \vspace{0.2cm} \sigma^2_{{\bf V}_{\rm{I}}[c_1+2n_{\rm{s}}-1]|{\rm{H}}_0}    \\
 \hdashline \vspace{-0.3cm} \\
\vdots \vspace{0.3cm}
 \end{array}\right] \stackrel{(a)}{=} 
\hspace{-0.1cm} 
   \left[\begin{array}{@{}c@{}}
 \vspace{0.2cm}  
 \vdots\\
0 \\ \vspace{0.2cm} 0  \\     \hdashline  \\
 \vspace{0.2cm} \sigma^2_{0} \\  \sigma^2_{1}  \\ \vdots \\ \vspace{0.2cm} \sigma^2_{n_{\rm{s}}-1}   \\ \hdashline  \\
 \vspace{0.2cm} \sigma^2_{0} \\  \sigma^2_{1}  \\ \vdots \\ \vspace{0.2cm} \sigma^2_{n_{\rm{s}}-1}   \\
 \hdashline \vspace{-0.3cm} \\
\vdots \vspace{0.3cm}
\end{array}\right]
\end{equation}

\noindent where $\boldsymbol{ \sigma}^2_{{\bf V}_{\rm{I}}|{\rm{H}}_0}[k]= \sigma^2_{{\bf V}_{\rm{I}}[k]|{\rm{H}}_0}$ corresponds to the variance of the  $k$-th received sample under ${\rm{H}}_0$, i.e.  $Y^{(n)}_{{\rm{I}}}[k]$.

Now, note that any \ac{to} in the received samples of length $m$ results in an only a shift in the vector of variances given in \eqref{eq: var matrix H0 siso} as 
\begin{equation} \label{eq: var Hd siso}
\boldsymbol{ \sigma}^2_{{\bf V}_{\rm{I}}|{\rm{H}}_d} =    \boldsymbol{ \sigma}^2_{{\bf V}_{\rm{I}}|{\rm{H}}_0} (d:d+m-1)
\end{equation}
where we denote a shortened version of an arbitrary unlimited-length vector ${\bf a}= [\cdots~a[-2]~a[-1]~a[0]~a[1]~a[2]~ \cdots ]^\textrm{T}$ by ${\bf a}(r:k),~ r\le k$, and is defined as
\begin{equation} \label{eq: def trun i j}
    {\bf a}(r:k) \triangleq \big[a[r]~a[r+1]~ \cdots~ a[k]\big]^\textrm{T}.
\end{equation}
Finally, using Equations \eqref{eq: var Hd siso} and \eqref{eq: mean siso}, we have 
\begin{align}\label{eq: V siso}
V^{(n)}_{{\rm{I}}}[k]\sim \mathcal{N}(0, \boldsymbol{ \sigma}^2_{{\bf V}_{\rm{I}}|{\rm{H}}_d}[k]),
\end{align}
\noindent where $V^{(n)}_{{\rm{I}}}[k]\sim \mathcal{N}(0, 0) $ means $V^{(n)}_{{\rm{I}}}[k]=0$ . Similar equation holds for $V^{(n)}_{{\rm{Q}}}[k]$.

Now that we derived the variance of  $V^{(n)}_{{\rm{I}}}[k]$ under assumption ${\rm{H}}_d$, i.e. Eq. \eqref{eq: var Hd siso}, we are ready to derive the \ac{pdf} of an in-phase component of a received sample $Y^{(n)}_{{\rm{I}}}[k]$ under ${\rm{H}}_d$. Since  $W^{(n)}[k]=W^{(n)}_{{\rm{I}}}[k]+j W^{(n)}_{{\rm{Q}}}[k]$, from \eqref{eq: noise pdf} we have 
\begin{equation}\label{eq: noise in-phase siso}
f_{W^{(n)}_{\rm{I}}[i]}(w)= \sum^{L-1}_{l=0} p_l \mathcal{CN}\Big{(}w:0, \frac{\sigma_{{\rm{w}}_l}^2}{2}\Big{)}.
\end{equation}
Finally, by substituting \eqref{eq: V siso} and \eqref{eq: noise in-phase siso} into \eqref{eq: convol siso}, we can write
\begin{equation} \label{eq: pdf inphase siso}
\begin{split}
&f_{Y_{\rm{I}}[k]} (y|{\rm{H}}_d) =
\sum^{L-1}_{l=0}
\frac{p_l}{\sqrt{2\pi \frac{\sigma_{{\rm{w}}_l}^2}{2}}}
\frac{1}{\sqrt{2\pi \boldsymbol{ \sigma}^2_{{\bf V}_{\rm{I}}|{\rm{H}}_d}[k]}} \times \\
&\int^{\infty}_{-\infty}
\exp\bigg\{-\frac{1}{2\frac{\sigma_{{\rm{w}}_l}^2}{2}}   \big(y -  v\big)^2 \bigg\} \exp\Big{(}-\frac{v^2}{2\boldsymbol{ \sigma}^2_{{\bf V}_{\rm{I}}|{\rm{H}}_d}[k]}\Big{)} dv\\
&=
\sum^{L-1}_{l=0}
\frac{p_l \exp\big{(}-\frac{y^2}{\sigma_{{\rm{w}}_l}^2}\big{)}}{\sqrt{2\pi \frac{\sigma_{{\rm{w}}_l}^2}{2}}}
\frac{1}{\sqrt{2\pi \boldsymbol{ \sigma}^2_{{\bf V}_{\rm{I}}|{\rm{H}}_d}[k]}} \\
&\int^{\infty}_{-\infty}
\exp\bigg\{- \Big{(} \frac{1}{\sigma_{{\rm{w}}_l}^2} +  \frac{1}{2\boldsymbol{ \sigma}^2_{{\bf V}_{\rm{I}}|{\rm{H}}_d}[k]} \Big{)} v^2 + \Big{(}\frac{2y}{\sigma_{{\rm{w}}_l}^2}\Big{)} v \bigg\}  dv\\
&\stackrel{a}{=} \sum^{L-1}_{l=0}    \frac{p_l}{\sqrt{2\pi (\boldsymbol{ \sigma}^2_{{\bf V}_{\rm{I}}|{\rm{H}}_d}[k] + \frac{\sigma_{{\rm{w}}_l}^2}{2})}}  \exp\bigg{(}-\frac{y^2}{2\Big{(}\boldsymbol{ \sigma}^2_{{\bf V}_{\rm{I}}|{\rm{H}}_d}[k] + \frac{\sigma_{{\rm{w}}_l}^2}{2}\Big{)}}\bigg{)}
\end{split}
\end{equation}
\noindent one can derive the similar equations for $f_{Y_{\rm{Q}}[k]} (y|{\rm{H}}_0)$.

Before we dig into the last step and derive the \ac{pdf} of $Y^{(n)}[k]$, we use the next definitions to simplify notations. 

\textit{Definition 1}: The mathematical function $B\big(z,t \big): \mathds{R} \times \mathds{R} \longmapsto \mathds{R}$  is defined as 
\begin{equation}
B\big(z,t \big)= 
    \sum^{L-1}_{l=0}    \frac{p_l}{\sqrt{2\pi (t + \frac{\sigma^2_{{\rm{w}}_l}}{2})}}  \exp\Big{(}-\frac{z^2}{2(t + \frac{\sigma^2_{{\rm{w}}_l}}{2})}\Big{)},
\end{equation}
\noindent where  $t \ge 0$.

\textit{Definition 2}: The mathematical function $\mathcal{P}\big(c, \kappa \big) : \mathds{C} \times \mathds{R} \longmapsto \mathds{R}$ is called $\mathcal{P}$-function, and  is defined as 
\begin{equation} \label{eq: P-function def}
\mathcal{P}\big(c, \kappa \big)= B\big(c_1,\kappa \big) B\big(c_2,\kappa \big),
\end{equation}
\noindent where $c_1= {\rm{Re}}\{c\}$, $c_2=
{\rm{Im}}\{c\}$, and $\kappa \ge  0$ is called shape parameter.

Using \textit{Definitions 1} and \textit{2} and Equation \eqref{eq: pdf inphase siso}, one can show that the \ac{pdf}  of the received samples under ${\rm{H}}_d$, i.e. $f_{Y[k]} (y|{\rm{H}}_d)$,  are $\mathcal{P}$-functions with different shape parameters as 
\begin{equation} \label{eq: repeat pat n>0}
\begin{split}
 f_{Y[k]} (y|{\rm{H}}_d) &= f_{Y_{\rm{I}}[k]} (y_{\rm{I}}|{\rm{H}}_d)  f_{Y_{\rm{Q}}[k]} (y_{\rm{Q}}|{\rm{H}}_d)\\
 &= B\big(y_{\rm{I}}, \boldsymbol{ \sigma}^2_{{\bf V}_{\rm{I}}|{\rm{H}}_d}[k] \big) B\big(y_{\rm{Q}}, \boldsymbol{ \sigma}^2_{{\bf V}_{\rm{I}}|{\rm{H}}_d}[k] \big) \\
 &= \mathcal{P}\big(y,\boldsymbol{ \sigma}^2_{{\bf V}_{\rm{I}}|{\rm{H}}_d}[k] \big),
 \end{split}
\end{equation}
\noindent where $y_{\rm{I}}= {\rm{Re}}\{y\}$, $y_{\rm{Q}}=
{\rm{Im}}\{y\}$. 

Finally, for any arbitrary number of received samples $m \ge 1$, i.e. ${\bf y}= \big[~ y[0]~y[1]~\cdots~y[m-1]~\big]^{\rm{T}}$, using \eqref{eq: repeat pat n>0}, we have 
\begin{equation} \label{eq: ml siso}
\begin{split}
\hat{\boldsymbol{d}}^{\text{opt}} &= \operatorname*{argmax}_{d \in \mathcal{D} }~  \prod^{m-1}_{k=0} f_{Y[k]}({\bf y}[k]| {\rm{H}}_{d})\\
&= \operatorname*{argmax}_{d \in \mathcal{D} }~  \prod^{m-1}_{k=0}  \mathcal{P}\big({\bf y}[k], \boldsymbol{ \sigma}^2_{{\bf V}_{\rm{I}}|{\rm{H}}_d}[k] \big).
\end{split}
\end{equation}

The next theorem summarizes our discussions in this section.

\begin{theorem}
In a doubly selective channel and under impulsive noise defined in \eqref{eq: noise pdf}, the approximate \ac{nda} \ac{ml} \ac{to} estimator for a \ac{zp}-\ac{ofdm} systems in given as

\begin{equation} \label{eq: theo ml siso}
\hat{\boldsymbol{d}}^{\text{opt}} = \operatorname*{argmax}_{d \in \mathcal{D} }~  \prod^{m-1}_{k=0}  \mathcal{P}\big({\bf y}[k], \boldsymbol{ \sigma}^2_{{\bf V}_{\rm{I}}|{\rm{H}}_d}[k] \big).
\end{equation}
where $\mathcal{P}\big(c, \kappa \big)$ is defined in  \eqref{eq: P-function def} and $\boldsymbol{ \sigma}^2_{{\bf V}_{\rm{I}}|{\rm{H}}_d}[k]$ is given in \eqref{eq: var Hd siso}.

\end{theorem}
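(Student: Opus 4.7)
The plan is to assemble the theorem from the four ingredients already developed in Section~\ref{sec: siso}, in the order they appear: (i) a factorization of the joint likelihood into per-sample factors, (ii) an in-phase/quadrature split, (iii) identification of each marginal as a $\mathcal{P}$-function, and (iv) the standard ML argmax. Since the theorem only restates \eqref{eq: ml siso} as the approximate NDA-ML estimator, the work is really a bookkeeping argument that checks the approximations are exactly those already justified in the text.

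First I would start from the joint conditional density $f({\bf y}^{(0)},\ldots,{\bf y}^{(N-1)}\mid {\rm H}_d)$ and invoke Corollary~\ref{corol: independ} to factor across observation blocks, and then invoke Lemma~\ref{lemma: obser vec} together with the sensitivity argument cited from \cite{koosha2020} to further factor across samples within a block, giving the product form in \eqref{eq: indepency 1}. Then, because the I and Q noise components in \eqref{eq: noise pdf} are independent zero-mean Gaussian mixtures and the signal's I/Q components are independent as well, each per-sample factor splits into $f_{Y^{(n)}_{\rm I}[k]}\,f_{Y^{(n)}_{\rm Q}[k]}$, as in \eqref{eq: indepency 2}.

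Next I would compute each I/Q marginal by the convolution in \eqref{eq: convol siso}. Here the one real approximation enters: the signal contribution $V^{(n)}_{\rm I}[k]$ is treated as zero-mean Gaussian with variance $\boldsymbol{\sigma}^2_{{\bf V}_{\rm I}\mid{\rm H}_d}[k]$. The mean is immediate from \eqref{eq: mean siso}, and the variance comes from the case analysis \eqref{eq: var siso}--\eqref{eq: expa conv 21}; the effect of a delay $d$ is captured by the shift in \eqref{eq: var Hd siso}. Convolving this Gaussian with the Gaussian-mixture noise density \eqref{eq: noise in-phase siso} is a term-by-term Gaussian convolution, so completing the square inside the integral (the $(a)$ step in \eqref{eq: pdf inphase siso}) yields a new Gaussian mixture whose component variances are $\boldsymbol{\sigma}^2_{{\bf V}_{\rm I}\mid{\rm H}_d}[k]+\sigma^2_{{\rm w}_l}/2$. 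By \textit{Definition 1} this marginal is exactly $B\bigl(y_{\rm I},\boldsymbol{\sigma}^2_{{\bf V}_{\rm I}\mid{\rm H}_d}[k]\bigr)$, and the analogous computation holds for Q.

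Putting the I and Q marginals back together, \textit{Definition 2} identifies their product as $\mathcal{P}\bigl(y,\boldsymbol{\sigma}^2_{{\bf V}_{\rm I}\mid{\rm H}_d}[k]\bigr)$, which is \eqref{eq: repeat pat n>0}. Taking the logarithm is monotone, so maximizing the factored likelihood over $d\in\mathcal{D}$ is equivalent to maximizing the product of these $\mathcal{P}$-functions, giving \eqref{eq: theo ml siso} and proving the claim. I expect the only subtle step to be the justification of the Gaussian approximation for $V^{(n)}_{\rm I}[k]$: strictly speaking $V^{(n)}_{\rm I}[k]$ is a sum of products of independent complex Gaussians and is therefore not Gaussian, but since the CLT-based argument for the OFDM samples in \eqref{eq: ofdm samples gauss} together with the conditional-Gaussian structure (given the channel taps) matches the first two moments exactly, the replacement is the standard moment-matching approximation already announced in the text as the source of the ``approximate'' qualifier in the theorem.
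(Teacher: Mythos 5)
Your proposal is correct and follows essentially the same route as the paper: factorize the joint likelihood across blocks and (approximately) across samples, split into I/Q components, approximate the signal term $V^{(n)}_{\rm I}[k]$ by a zero-mean Gaussian with the shifted variance $\boldsymbol{\sigma}^2_{{\bf V}_{\rm I}\mid{\rm H}_d}[k]$, convolve term-by-term with the Gaussian-mixture noise to obtain the $B$- and $\mathcal{P}$-function form of \eqref{eq: pdf inphase siso}--\eqref{eq: repeat pat n>0}, and maximize over $d\in\mathcal{D}$. The only cosmetic difference is that you justify the Gaussian replacement by moment matching while the paper validates it empirically via higher-order statistics (Table \ref{table: pdfs metrics}), which does not change the argument.
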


\subsection{Higher Order Statistical  Analysis}

In order to see that the Gaussian approximation of 
$V^{(n)}_{{{\rm{I}}}}[k]$, or $V^{(n)}_{{{\rm{Q}}}}[k]$, is mainly valid, we have plotted the true \ac{pdf} of $Y^{(n)}_{{{\rm{I}}}}[k]$ versus the approximate \ac{pdf} given in \eqref{eq: pdf inphase siso} in Fig. \ref{fig: Empirical vs Analytical} for different values of $k=1$, chosen from the first range given in \eqref{eq: expa conv 2}, and $k=150$, chosen from the second range in \eqref{eq: expa conv 2}. Moreover, the corresponding probabilistic measures such as mean, variance, skewness, and kurtosis of the true and approximate \ac{pdf}s are given in Table \ref{table: pdfs metrics}. Skewness measures the asymmetry of a probability distribution around its mean, and is defined as 
\begin{align}
sk \triangleq \frac{\mathbb{E}\{(Y-\mu)^3\}}
{\mathbb{E}\{(Y-\mu)^2\}}.
\end{align}
Kurtosis is a measure of sharpness of a \ac{pdf}, and is defined as 
\begin{align}
ku \triangleq \frac{\mathbb{E}\{(Y-\mu)^4\}}
{\big(\mathbb{E}\{(Y-\mu)^2\} \big)^2}.
\end{align}

The true \ac{pdf} is obtained through $10^6$ Monte Carlo simulations where $n_{\rm{x}}=512, n_{\rm{z}}=20, n_{\rm{h}}=10, f_{\rm{s}}=10^6$, and modulation order is $M=128$. Also, an exponential power delay profile is considered and is given as 
\begin{align}
\mathbb{E}\{|h[k]|^2\}= \sigma^2_{{\rm{h}}_k}=\alpha \exp\big{(}-\beta k\big{)},
\, k=0,1,\cdots, 9,
\end{align}
\noindent where $\alpha=0.396$ is a normalizing factor that guarantees the average energy of the channel taps sum to one, and $\beta=0.5$. For the sake of simplicity in illustrations, we set $p_0=1$, and $p_l=0,~ \forall l\neq 0$, which resembles a Gaussian noise.

As can be seen in Fig. \ref{fig: Empirical vs Analytical}(a), when $k=1$, the approximate \ac{pdf}  deviates from the true \ac{pdf} in terms of Kurtosis, the sharpness of the \ac{pdf}, and almost matches the other metrics such as skewness and variance. This deviation holds for the samples in the first and third ranges given in \eqref{eq: expa conv 2} for the given setup, i.e.  $0 \le k \le n_{\rm{h}}-2$ and $n_{\rm{x}} \le k \le n_{\rm{x}}+n_{\rm{h}}-2$. However, as seen in Fig. \ref{fig: Empirical vs Analytical}(b), when $k=150$, the approximate \ac{pdf}  reasonably matches with the true \ac{pdf} in terms of all probabilistic measures given in Table \ref{table: pdfs metrics}. This approximate matching holds for the second and fourth ranges given in \eqref{eq: expa conv 2} for the given setup, i.e. $n_{\rm{h}}-1 \le k \le n_{\rm{x}}-1$ and 
$n_{\rm{x}}+n_{\rm{h}}-1 \le k \le n_{\rm{s}}-1$. These ranges contain $\frac{n_{\rm{x}} +n_{\rm{z}}-2n_{\rm{h}}+2   }{n_{\rm{x}}+n_{\rm{z}} }$ of the samples in a given received \ac{ofdm} vector which is more than $96\%$ of the samples. That is, the approximate \ac{pdf}  reasonably matches the true \ac{pdf} of more than $95\%$ of the samples in a given received \ac{ofdm} vector while it slightly differs for less than $5\%$ of the received samples. This implies that the performance of the final estimator should not degrade considerably; although, the final estimator should have significantly lower complexity compared to that of given in \cite{koosha2020} due to the simplicity of the approximate \ac{pdf}.

\begin{figure}
\centering
\subfloat[]{\includegraphics[width=3.1in]{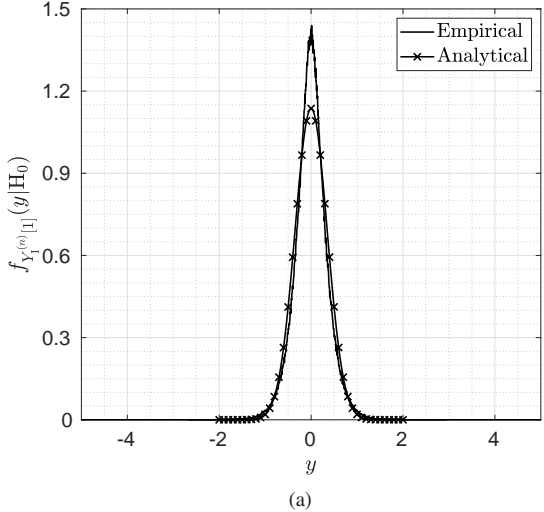}} 
\newline
\subfloat[]{\includegraphics[width=3.1in]{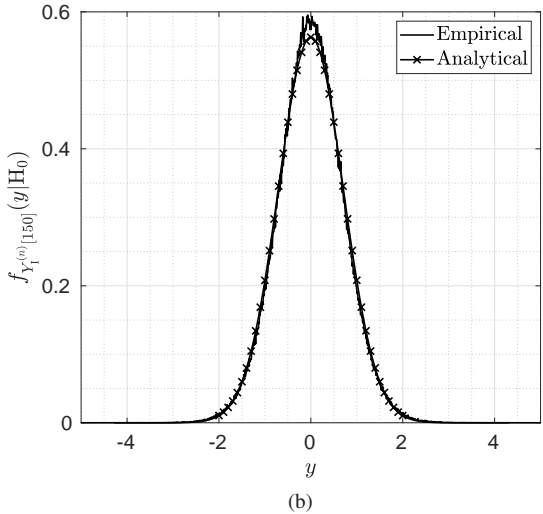}}
\caption{The comparison between the empirical and analytical \ac{pdf}s of the received samples for $k=1$ and $k=150$ when \ac{snr}$=15$ dB.} 
\label{fig: Empirical vs Analytical}
\end{figure}


\begin{table}[t!]
\centering 
 \caption{Various probability measures for  empirical \ac{pdf} versus analytical \ac{pdf}. }
\label{table: pdfs metrics}
\resizebox{0.5\textwidth}{!}{ \begin{tabularx}{0.42\textwidth}{@{  }l*{5}{C}@{  }}
\toprule
& \multicolumn{2}{c}{$k = 1$ } & \phantom{abc}& \multicolumn{2}{c}{$k = 150$}  \\\cmidrule{2-3} \cmidrule{5-6} & Empirical & Analytical  && Empirical & Analytical \\ 
\midrule
Mean           & -0.00022 & 0 &  & 0.00041 & 0 \\ 
Variance       & 0.1231 & 0.1232 &  & 0.5021  & 0.5023  \\
Skewness       & -0.0048 & 0 &  & 0.0090 & 0  \\ 
Kurtosis       & 4.4519 & 3 &  & 3.3000 & 3   \\ 
\bottomrule
\end{tabularx} }
\end{table}


\subsection{Case Study: Gaussian noise when $d \ge 0$ }

Assume the receiver starts late to receive the samples. That is $d \ge 0$. Also, assume the receiver uses $N$ vectors of $n_{\rm{s}}$ samples in order to estimate the \ac{to} in an environment with Gaussian noise. Taking logarithm of \eqref{eq: ml siso}, and after some mathematical manipulations, we have   
\begin{equation} \label{eq: ml siso wed}
\begin{split}
\hat{\boldsymbol{d}}^{\text{opt}} &= 
 \operatorname*{argmin}_{d \in \mathcal{D} }~  \sum^{Nn_{\rm{s}}-1}_{k=0}   \frac{|{\bf y}[k]|^2}{\boldsymbol{ \sigma}^2_{{\bf V}_{\rm{I}}|{\rm{H}}_d}[k] + \frac{\sigma_{{\rm{w}}_l}^2}{2}}.
\end{split}
\end{equation}
Equation \eqref{eq: ml siso wed} shows that the final estimator in the given setup turns into a \ac{wed}. Note that \ac{wed} assigns larger weights to those samples carrying no information, i.e. noise samples, while dedicating smaller weights to those carrying information.

\subsection{Energy Detector}

Inspired by \ac{wed} derived in \eqref{eq: ml siso wed}, we introduce a sub-optimal lower complexity estimator, i.e. \ac{ed}, where we consider an extreme case of \ac{wed}. That is,  we assign a very large weight, i.e. one, to noise samples while we assign a zero weight to those samples carrying information. Hence, 
\begin{equation} \label{eq: ml siso ed}
\begin{split}
\hat{\boldsymbol{d}}^{\text{opt}} &= 
 \operatorname*{argmin}_{d \in \mathcal{D} }~ \sum^{N-1}_{r=0} \sum^{\phi_2(d)}_{k=\phi_1(d) }   |{\bf y}[k]|^2, 
\end{split}
\end{equation}
where 
\begin{align}
&\phi_1(d) \triangleq n_{\rm{x}}+ n_{\rm{h}}-1 + r n_{\rm{s}}-d    \\
&\phi_2(d) \triangleq (r+1)n_{\rm{s}}-1-d.
\end{align}

In the next section, we extend the \ac{ml} \ac{to} estimator given in Equation \eqref{eq: ml siso} for \ac{siso} systems to an \ac{ml} \ac{to} estimator for \ac{mimo} systems.

\subsection{Maximum Likelihood Estimation for Multiple-Input Multiple-Output System} \label{sec: mimo}

In this subsection, we derive the \ac{ml} \ac{to} for a \ac{mimo}-\ac{ofdm} system under a frequency selective channel. We use the results proposed for \ac{siso} scenario, and try to extend the notations.  The next theorem gives  \ac{ml} \ac{to} estimator for \ac{mimo} systems. 

\begin{theorem} \label{theo: pdf y mimo}
For a \ac{zp} \ac{mimo}-\ac{ofdm} wireless communication system  and Class A impulsive noise defined in \eqref{eq: noise pdf}, the approximate \ac{ml} \ac{to} estimator is given as
\begin{equation} \label{eq: ml mimo}
\hat{\boldsymbol{d}}^{\text{opt}} 
= \operatorname*{argmax}_{d \in \mathcal{D} }~ \prod^{m_{\rm{r}}}_{j=1} 
\prod^{m-1}_{k=0}  \mathcal{P}\big({\bf y}_j[k],  \boldsymbol{\sigma}^2_{{\bf V}_{\rm{I}}|{\rm{H}}_d}[k] \big).
\end{equation}
where $\mathcal{P}\big(c, \kappa \big)$ is defined in  \eqref{eq: P-function def} and $\boldsymbol{ \sigma}^2_{{\bf V}_{\rm{I}}|{\rm{H}}_d}[k]$ is given in \eqref{eq: var Hd siso}. 

\end{theorem}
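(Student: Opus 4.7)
The plan is to lift the SISO derivation to MIMO by exploiting independence across receive antennas. Since the noise vectors ${\bf w}^{(n)}_j$ at different receive antennas are statistically independent (each receive chain sees its own thermal/impulsive noise process) and the channel taps $\{h_{ji}[k]\}$ are independent across $(j,i)$, the received vectors ${\bf y}^{(n)}_j$ for $j=1,\ldots,m_{\rm r}$ are mutually independent conditional on the hypothesis ${\rm H}_d$. Therefore I would first write
\begin{equation*}
f\bigl({\bf y}^{(0)},\ldots,{\bf y}^{(N-1)}\bigm| {\rm H}_d\bigr)
=\prod_{j=1}^{m_{\rm r}} f\bigl({\bf y}^{(0)}_j,\ldots,{\bf y}^{(N-1)}_j\bigm| {\rm H}_d\bigr),
\end{equation*}
which reduces the MIMO likelihood to a product of ``per-receive-antenna'' likelihoods that look identical to the SISO case analysed earlier in the section.

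Next, for a fixed receive antenna $j$, I would repeat the SISO argument. Using Lemma~\ref{lemma: obser vec} together with Corollary~\ref{corol: independ}, applied now to the stacked signal $v^{(n)}_j[k]=\sum_{i=1}^{m_{\rm t}}({\rm H}_{ji}\,{\bf s}^{(n)}_i)[k]$, the per-antenna joint PDF factorises across $k$ and $n$ up to the same correlation-versus-independence approximation adopted in \eqref{eq: indepency 1}. Decomposing each complex sample into independent in-phase and quadrature parts, the task again boils down to computing the mean and variance of $V^{(n)}_{{\rm I},j}[k]$ under ${\rm H}_0$. By the zero-mean property of the channel taps and of $s^{(n)}_{i,{\rm I}}[k]$, and by the independence of the $m_{\rm t}$ transmit streams, the mean is still zero and the variance $\mathbb{E}\{(V^{(n)}_{{\rm I},j}[k])^2\mid {\rm H}_0\}$ is a sum of $m_{\rm t}$ copies of the SISO expression in \eqref{eq: var siso}; under the standard per-stream power-normalisation that is implicit in $\sigma_{\rm x}^2$, this sum is exactly $\boldsymbol{\sigma}^2_{{\bf V}_{\rm I}|{\rm H}_0}[k]$, and a delay $d$ again just shifts the index as in \eqref{eq: var Hd siso}.

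Having matched the second-order statistics to the SISO case, the Gaussian approximation of $V^{(n)}_{{\rm I},j}[k]$ (justified by the same CLT argument used for SISO and strengthened here because we are summing $m_{\rm t}$ Gaussian-like contributions) gives
\begin{equation*}
f_{Y_j[k]}(y\mid {\rm H}_d)=\mathcal{P}\bigl(y,\boldsymbol{\sigma}^2_{{\bf V}_{\rm I}|{\rm H}_d}[k]\bigr),
\end{equation*}
by the same convolution computation that produced \eqref{eq: pdf inphase siso} and \eqref{eq: repeat pat n>0}. Multiplying the per-sample $\mathcal{P}$-functions over $k=0,\ldots,m-1$ and then over $j=1,\ldots,m_{\rm r}$, and taking the argmax over $d\in\mathcal D$, yields exactly the estimator in \eqref{eq: ml mimo}.

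The main obstacle is the variance bookkeeping for the superposed MIMO signal: one has to verify that the sum over the $m_{\rm t}$ transmit antennas reproduces the SISO variance profile $\boldsymbol{\sigma}^2_{{\bf V}_{\rm I}|{\rm H}_d}[k]$ without an $m_{\rm t}$-dependent multiplicative factor, which forces an explicit statement of the per-stream power convention. A secondary, but less arithmetical, issue is re-justifying the Gaussian approximation of $V^{(n)}_{{\rm I},j}[k]$ at the boundary indices (the first and third ranges of \eqref{eq: expa conv 2}); however, as in the SISO subsection this affects only a vanishing fraction of samples and so does not compromise the final argmax form.
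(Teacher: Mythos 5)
Your proposal is correct and follows essentially the same route as the paper: the paper's proof writes ${\bf y}_j^{(n)}=\sum_{i=1}^{m_{\rm t}}{\rm H}_{ji}{\bf s}_i^{(n)}+{\bf w}_j^{(n)}$, imposes the explicit per-stream power normalization $\mathbb{E}\{s_i^{(n)}[k]\,s_p^{(n)}[k']^*\}=\tfrac{\sigma_x^2}{m_{\rm t}}\delta[k-k']\delta[i-p]$ (so total transmit power matches the SISO case and the variance summed over the $m_{\rm t}$ streams reproduces $\boldsymbol{\sigma}^2_{{\bf V}_{\rm I}|{\rm H}_d}[k]$ exactly), and then simply invokes "the same steps as in the SISO case" to obtain the product over $k$ and $j$ in \eqref{eq: ml mimo}. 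The variance bookkeeping you identify as the main obstacle is precisely the one point the paper states explicitly, so your more detailed per-antenna factorization fills in, rather than departs from, the published argument.
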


\begin{proof}
Using Equations \eqref{Sys Model: matrix form conv 2}, \eqref{matrix H} and \eqref{sys mod:  s y w} for a \ac{mimo}-\ac{ofdm} system, we have 
\begin{equation} \label{eq: proof mimo}
{\bf y}_{j}^{(n)}= \sum_{i=1}^{{\rm{m_t}}} {\bf {\rm{H}}}^{(n)}_{ji} {\bf s}^{(n)}_i + {\bf w}^{(n)}_j,~ \forall j \in \{1,2,\cdots,{\rm{m_r}}\}.
\end{equation}
\noindent Also, we have 
\begin{align} \label{eq: ofdm samples gauss power}
\mathbb{E}\Big{\{}s^{(n)}_i[k] s^{(n)}_p[k']^*\Big{\}}&=\mathbb{E}\Big{\{}x^{(n)}_i(kT_{{\rm{sa}}}) x^{(n)}_p(k'T_{{\rm{sa}}})^*\Big{\}} \\  \nonumber
&= \frac{\sigma^2_x}{m_{\rm{t}}} \delta[k-k'] \delta[i-p],\\ \nonumber &\forall i,p \in \{1,2,\cdots,{\rm{m_t}} \}, \\  \nonumber
&\forall k,k' \in \{0, 1, \cdots , {n_{\rm{x}}}-1 \}.
\end{align}
\noindent since the transmit power should remain the same for \ac{siso} and \ac{mimo} systems. 
Using \eqref{eq: proof mimo} and following the same steps as in \ac{siso} case, one can  arrive at \eqref{eq: ml mimo}.
\end{proof}

Next section compares the complexity of the proposed approximate \ac{ml} algorithm with that of given in \cite{koosha2020}.

\section{Complexity} 
\label{sec: complexity}

Complexity of an algorithm plays a crucial role in using  the algorithm in wireless communication systems. That is, an algorithm should be rather simple yet accurate enough in order to be considered for practical implementations. To this end, the complexity of the proposed approximate \ac{ml} algorithm, referred to as A-\ac{ml}, with the complexity of the original \ac{ml} estimator in \cite{koosha2020}, denoted as  O-\ac{ml}, and that of Transition Metric \cite{LeNir2010} are given in Table  \ref{table:complexity}.    As can be seen, A-\ac{ml} has significantly lower computational complexity compared to O-\ac{ml} while possessing a negligible performance loss in terms of lock-in probability, shown in next section. Also, A-\ac{ml} has  the same computational complexity as Transition Metric while demonstrating a large performance gap in terms of lock-in probability. 


As seen, the main advantage of A-\ac{ml}  is its simplicity which enables the designer to easily extend it to \ac{mimo} systems.  Note that A-\ac{ml}  can be implemented in fully vectorized format that makes it considerably faster than O-\ac{ml}. One can further improve the complexity and the exhaustive search in \eqref{eq: ml mimo} by using complexity-reduced search algorithms such as Golden Section Search algorithm which has a significantly lower complexity of $\mathcal{O}(\log(|\mathcal{D}|))$ compared to that of exhaustive search,  $\mathcal{O}(|\mathcal{D}|)$.

\begin{table}[t!]                    
 \centering 
    \caption{Complexity of the proposed algorithms}
    \resizebox{0.39\textwidth}{!}{ \begin{tabularx}{0.3\textwidth}{cc}  
    \toprule
    
    {\bf Estimator} & {\bf Computational Complexity }   \\
    \hline
    
    A-ML   
    &    $\mathcal{O}(Nn_{\rm{s}})$ \\ 
      O-ML & $\mathcal{O}(Nn_{\rm{s}}^3)$      \\ 
            Transition Metric & $\mathcal{O}(Nn_{\rm{s}})$     \\ 
     
    \bottomrule
    \end{tabularx} }

    \label{table:complexity}                           
\end{table}

\section{Simulations}

\label{sec: simul}

In this section, we compare the proposed algorithm  A-\ac{ml} with O-\ac{ml} given in \cite{koosha2020}, and Transition Metric \cite{LeNir2010}.

\subsection{Simulation Setup}

Unless otherwise mentioned, the following setup is considered for simulations. A \ac{zp}-\ac{ofdm} system with 128-QAM modulation in a frequency-selective Rayleigh fading channel is considered with data samples length of $n_{\rm{x}}=512$, and \ac{zp} guard interval of length $n_{\rm{z}}=20$.  The number of received \ac{ofdm} vectors used for estimation is set $N=10$. Sampling rate is $f_{\rm{s}}=10^6$. The exponential channel delay profile parameters are $\alpha=1, \beta=0.05$, where $n_{\rm{h}}=10$.  A Jakes model for Doppler spectrum with maximum Doppler shift of  $f_{\rm{D}}=5$ Hz is considered. A two-components impulsive noise with parameters  $p_0=0.99$, $p_1=0.01$, $\sigma_{{\rm{w}}_0}^2=1$, and $\sigma_{{\rm{w}}_1}^2=100$ is set.  \ac{snr}  in dB is defined as 
$\gamma\triangleq 10 \log(\frac{\sigma_{ \rm{x}}^2}{\sigma_{\rm{w}}^2})$. Simulation results are obtained through $10^4$ Monte Carlo realizations, and the delay is uniformly chosen from the range   $d \in [-30 ,   30]$.


\subsection{Simulation Results}

The probability of lock-in of A-\ac{ml}, O-\ac{ml} and Transition Metric for different values of \ac{snr} when $m_{\rm{t}}=m_{\rm{r}}=1$ are depicted in Fig. \ref{fig: snr}. As shown, there is a negligible performance gap between A-\ac{ml} and O-\ac{ml} while A-\ac{ml} possesses a much lower computational complexity. Also, A-\ac{ml} significantly outperforms Transition Metric.

\begin{figure}
\centering
\includegraphics[height=2.835in]{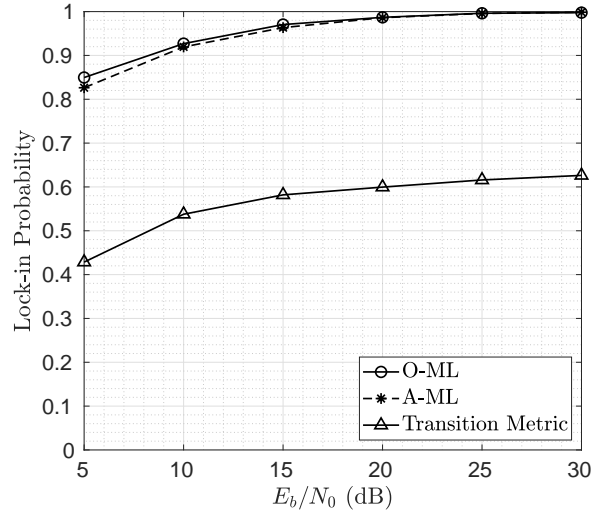}
  \caption{Lock-in probability of A-ML, O-ML and Transition Metric  for  different  values  of  \ac{snr}.  } \label{fig: snr}
\end{figure}

The performance of A-\ac{ml} versus the number of observation vectors used for estimation is shown in Fig. \ref{fig: obser}. As seen, the performance of A-\ac{ml} improves as the number of observation vectors increases. This figure shows that with a reasonable buffer capacity or an increase in the number of antennas,  a receiver using A-\ac{ml} is able to achieve high lock-in probability, e.g more than 0.9.

\begin{figure}
\centering
\includegraphics[height=2.835in]{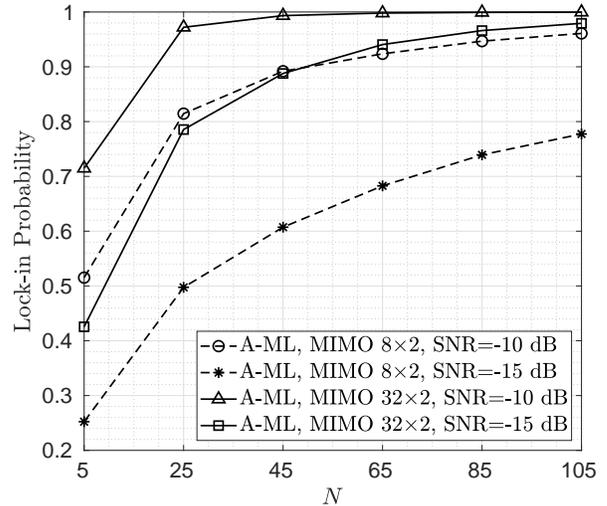}
  \caption{Lock-in probability of  A-\ac{ml} for different number of observation vectors used for estimation.  } \label{fig: obser}
\end{figure}

The performance of \ac{wed} and  \ac{ed} for different values of positive delays, i.e. when the receiver starts late to receive samples,  and  for various \ac{snr} values is depicted in Fig. \ref{fig: wed}. This figure shows that \ac{ed} is able to achieve a high probability of lock-in in higher \ac{snr}s while a considerable performance gap is clear at lower \ac{snr}s. This performance gap is originating from the simplifying assumption of zero and one weight assignments in \ac{ed}, which results in significantly lower computational complexity of \ac{ed} compared to \ac{wed}.

\begin{figure}
\centering
\includegraphics[height=2.835in]{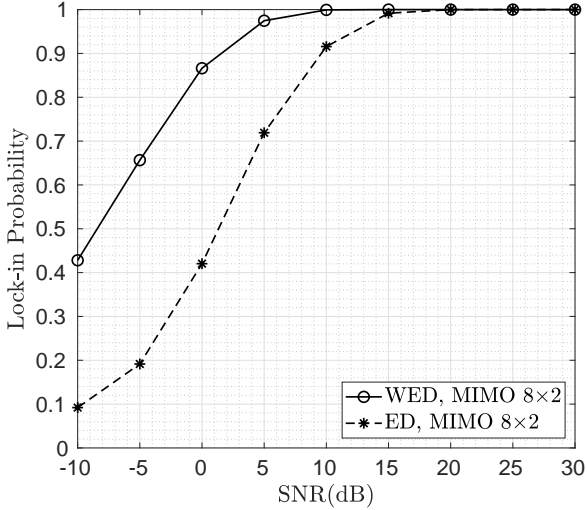}
  \caption{Lock-in probability of  \ac{wed} and \ac{ed} for values of \ac{snr}.  } \label{fig: wed}
\end{figure}

Fig. \ref{fig: mimo} shows the performance of A-\ac{ml} for \ac{mimo} systems for different values of $m_{\rm{t}}$ and $m_{\rm{r}}$. As seen, the performance of A-\ac{ml} improves significantly due to the fact that the number of receive samples used for estimation increases when the number of transmit or receive antenna increases. Increasing the number of observations improves the accuracy of \ac{ml} estimators; hence, the performance of A-\ac{ml} improves.

\begin{figure}
\centering
\includegraphics[height=2.835in]{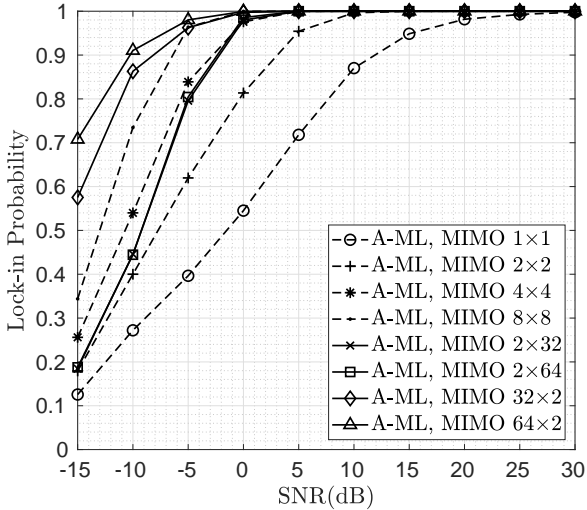}
  \caption{Lock-in probability of  A-\ac{ml} for various number of antennas.  } \label{fig: mimo}
\end{figure}

The effect of impulsive noise on the performance of A-\ac{ml} is shown in Fig. \ref{fig: noise}. Here, we consider a two-components Gaussian mixture where $\sigma_{{\rm{w}}_0}^2=1$ $\sigma_{{\rm{w}}_1}^2=100$, and we change the ratio of $p_0/p_1$. As seen, when a component, i.e. $\sigma_{{\rm{w}}_0}^2=1$, becomes strong, i.e. $p_0$ increases, the performance improves. This is because the uncertainty of A-\ac{ml} arising from Eq. \eqref{eq: pdf inphase siso} decreases as $p_0$ increases.

\begin{figure}
\centering
\includegraphics[height=2.835in]{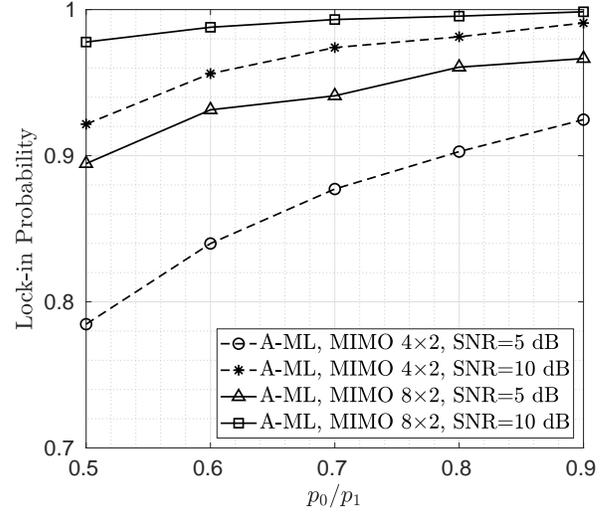}
  \caption{Lock-in probability of  A-\ac{ml} for different values of $p_0/p_1$.  } \label{fig: noise}
\end{figure}

Since A-\ac{ml} employs  power delay profile for synchronization,  we study the sensitivity of A-\ac{ml} to power delay profile estimation errors in Fig. \ref{fig: sens}. In order to generate power delay profile errors, we use the following equation 
\begin{equation}
\sigma^2_{{\rm{h}}^{\rm{new}}_k} =   (1+ A_k \alpha ) \sigma^2_{{\rm{h}}_k}, \, k=0,1,\cdots, n_{\rm{h}}-1,
\end{equation}
where $A_k$ is uniformly (randomly) chosen for each tap  from the set $\{ -1, 1\}$. $\sigma^2_{{\rm{h}}^{\rm{new}}_k} $, instead of $\sigma^2_{{\rm{h}}_k}$, is then fed to A-ML in order to estimate \ac{to}. The probability of lock-in of A-\ac{ml} for different values of $\alpha$ is shown in Fig. \ref{fig: sens}. Although the performance of A-\ac{ml} degrades with an increase in power delay profile error; however, a large error is required to achieve a $50\%$ loss in terms of lock-in probability. Moreover, note that an error of 0.7 results in less than four  percent performance loss which implies that A-\ac{ml} is fairy insensitive to power delay profile estimation errors.

\begin{figure}
\centering
\includegraphics[height=2.835in]{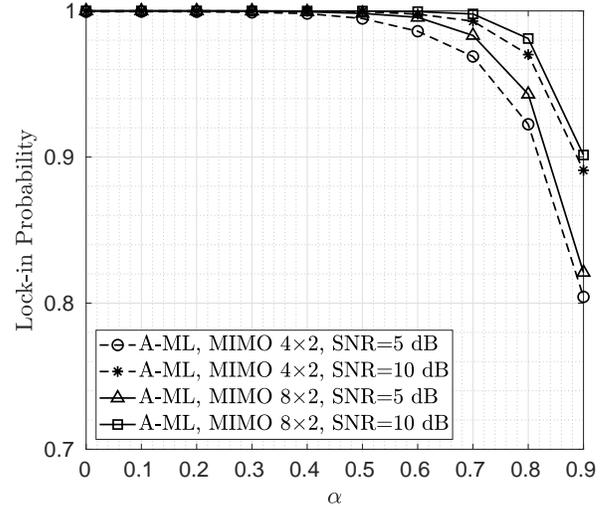}
  \caption{Lock-in probability of  A-\ac{ml} for different values of power delay profile estimation errors.  } \label{fig: sens}
\end{figure}

\section{Conclusion}
\label{sec: conclu}

\ac{zp}-\ac{ofdm} systems possess many advantages compared to \ac{cp}-\ac{ofdm} systems. However, the time synchronization in \ac{zp}-\ac{ofdm} systems are significantly challenging  due to the lack of \ac{cp}. In this paper, we proposed an approximate yet accurate low-complexity \ac{nda} \ac{ml} \ac{to} estimator, i.e. A-\ac{ml}, for \ac{zp} \ac{mimo}-\ac{ofdm} systems in highly selective channels. We showed that A-\ac{ml} has a significantly lower complexity than that of proposed in \cite{koosha2020}, i.e. O-\ac{ml}, while having a negligible performance gap in terms of lock-in probability. This makes A-\ac{ml}, unlike O-\ac{ml}, suitable for practical implementations.  Moreover, it is shown that A-\ac{ml}  dramatically outperforms the current stat-of-the-art \ac{nda} \ac{to} estimator for \ac{zp}-\ac{ofdm} referred to as Transition Metric.



\IEEEpeerreviewmaketitle

\bibliographystyle{IEEEtran}
\bibliography{IEEEabrv,Reference}

\end{document}